\newtheorem{theorem}{Theorem}
\newcommand{\bra}[1]{\mbox{$\left\langle #1 \right|$}}
\newcommand{\ket}[1]{\mbox{$\left| #1 \right\rangle$}}
\newcommand{\CWWfour}{CWW17$_4$\xspace}
\newcommand{\SSfour}{SS10$_4$\xspace}
\begin{document}
\title{Chau-Wang-Wong17 Scheme Is Experimentally More Feasible Than The Six-State Scheme}
\author{H. F. Chau}
\email{hfchau@hku.hk}
\affiliation{Department of Physics, University of Hong Kong, Pokfulam Road,
 Hong Kong}
\affiliation{Center of Theoretical and Computational Physics, University of
 Hong Kong, Pokfulam Road, Hong Kong}
\author{Zhen-Qiang Yin}
\email{yinzq@ustc.edu.cn}
\author{Shuang Wang}
\email{wshuang@ustc.edu.cn}
\author{Wei Chen}
\author{Zheng-Fu Han}
\affiliation{CAS Key Laboratory of Quantum Information, CAS Center For Excellence in Quantum Information and Quantum Physics,
	University of Science and Technology of China, Hefei 230026, China}
\affiliation{State Key Laboratory of Cryptology, P. O. Box 5159, Beijing 100878, P. R. China}
\date{\today}
\begin{abstract}
 Recently, Chau \emph{et al.} [Phys. Rev. A \textbf{95}, 022311 (2017)]
 reported a quantum-key-distribution (QKD) scheme using four-dimensional qudits.
 Surprisingly, as a function of the bit error rate of the raw key, the secret
 key rate of this scheme is equal to that of the (qubit-based) six-state
 scheme under one-way classical communication using ideal apparatus in the
 limit of arbitrarily long raw key length.
 Here we explain why this is the case in spite of the fact that these two
 schemes are not linearly related to each other.
 More importantly, we find that in terms of the four-dimensional dit error rate
 of the raw key, the Chau \emph{et al.}'s scheme can tolerate up to 21.6\%
 using one-way classical communications, which is better than the
 Sheridan and Scarani's scheme [Phys. Rev. A \text{82}, 030301(R) (2010)].
 In addition, we argue the experimental advantages of the Chau \emph{et al.}
 implementation over the standard six-state scheme and report a corresponding
 proof-of-principle experiment using passive basis selection with decoy states.
 We also compare our experiment with the recent high secret key rate
 implementation of the Sheridan and Scarani's scheme by Islam \emph{et al.}
 [Sci. Adv. \text{3}, e1701491].
\end{abstract}

\maketitle

\section{Introduction}
\label{Sec:intro}
 In theory, the six-state scheme~\cite{Bruss:6state:1998} is a powerful
 qubit-based quantum key distribution (QKD) scheme that tolerates higher
 channel noise (up to 12.6\% bit error rate (BER) using one-way classical
 communication using non-degenerate code~\cite{Lo:six:01}) than the BB84 protocol~\cite{Bennett:BB84:1984}.
 However, it does not attract much experimental attention because it requires
 more optical components, making it more lossy than the BB84
 protocol~\cite{Scarani:QKDrev:2008}.
 Along a different direction and based on an earlier work of
 Chau~\cite{Chau15}, his group reported an $N$-dimensional-qudit-based scheme
 that has exactly the same one-way secret key rate formula as a function of the
 bit error rate of the raw key for ideal equipment
 in the arbitrarily long raw key length limit provided that
 $N=4$~\cite{CWW2017}.
 We call their scheme the \CWWfour scheme, where the subscript refers to the
 Hilbert space dimension $N$ of the qudit used.

 It is instructive to find out the underlying reason for the agreement
 of the one-way key rate formulas of these two schemes.
 Here we first show that the six-state scheme cannot be imbedded in the
 \CWWfour scheme and vice versa.
 Then, we argue that this key rate agreement is out of a lucky coincidence.

 Along a different line, Sheridan and Scarani~\cite{SS10,SS10e}
 introduced a scheme using four-dimensional qudits as information carrier that
 we called the \SSfour scheme.
 In their scheme, states are prepared and measured either in the computational
 basis $\{ \ket{j} \}_{j=0}^3$ or its Fourier transformed basis $\{
 \sum_{k=0}^3 e^{\pi i j k/2} \ket{k} / 2 \}_{j=0}^3$.
 They showed that by using ideal apparatus and single photon source and in the
 limit of infinite raw key length, the \SSfour scheme tolerates up to an error
 rate of 18.9\% for qudit depolarizing channel~\cite{SS10,SS10e}.  Although not explicitly defined, it is
 evident from their analysis that they referred to the four-dimensional dit error rate\footnote{That is to say, there are four possible measurement outcomes for each qudit, say, $0,1,2,3$.  The dit error rate refers to the error rate of this sifted key expressed in dits.  One may convert this dit string to a bit string, say, by mapping $0,1,2,3$ to $00,01,10,11$, respectively.  And one may talk about the BER of this sifted bit string key.} (DER) of the raw key.
 And this translates to a tolerable BER of 12.6\%, which equals exactly
 that of the six-state scheme.
 Nevertheless, by carefully studying their proof, it is clear that if Alice and Bob just care about the average BER without looking into the three different four-dimensional DERs, their scheme can only tolerate up to 11.0\% BER just like the BB84 protocol because a channel that independently produces spin flip and phase shift errors to each qudit will produce the same BER as a qudit depolarizing channel.
 In fact, their proof implicitly showed that the secret key rate of (the unbiased basis selection version of) the \SSfour scheme is 1.5~times that of the six-state scheme partly because Alice and Bob has two rather than three bases to choose from.  We explicitly write down their argument in Appendix~\ref{Sec:SS10_rate}.

 It is instructive to compare the theoretical and experimental performances of
 the \CWWfour and the \SSfour schemes.
 Here we find that the \CWWfour scheme can tolerate up to a DER of 21.6\% in the raw key using one-way classical communciation in the infinite raw key length limit using ideal single photon source and detectors, which is higher than that of the \SSfour scheme.
 Based on earlier success of time-bin
 implementations~\cite{RRDPSexp1,RRDPSexp2,RRDPSexp3,RRDPSexp4,PassiveRRDPS,Chau15exp}
 of the round-robin differential-phase-shift~\cite{sasaki2014practical},
 the Chau15~\cite{Chau15} and the \SSfour~\cite{qkdtimebinqudits} schemes, it makes sense to implement the
 \CWWfour scheme via the time-bin representation.
 So, we perform such an experiment using passive basis selection and decoy
 states, and discuss its practical advantages over the original six-state scheme as well as comparing it with the recent implementation of a biased basis selection version of the \SSfour scheme by Islam \emph{et al.}~\cite{qkdtimebinqudits}.

\section{Differences between the six-state and the \CWWfour schemes}
\label{Sec:six-state_CWW4}
 Recall that in the six-state scheme, Alice and Bob prepare and measure qubits
 in one of the basis states of the following three mutually unbiased bases
 (MUBs) $\{\ket{0}, \ket{1}\}$, $\{(\ket{0}\pm\ket{1})/\sqrt{2}\}$, and
 $\{(\ket{0}\pm i\ket{1})/\sqrt{2} \}$~\cite{Bruss:6state:1998}; whereas those
 states for the \CWWfour scheme using four-dimensional qudits are $\ket{\psi_{jk}^\pm} \equiv (\ket{j}\pm
 \ket{k})/\sqrt{2}$ for $0 \leqslant j < k \leqslant 3$~\cite{CWW2017}.
 Here ${\mathcal B}_\ell = \{ \ket{j}\colon 0\leqslant j < \ell \}$ is an
 orthonormal basis of the $\ell$-dimensional Hilbert space.
 (Note that Ref.~\cite{CWW2017} labelled the four basis states using finite
 field notation to emphasize its underlying mathematical structure.  Here we just label them from $0$ to $3$ for the convenience of experimentalists.)
 In the six-state scheme, the raw key bit of Alice (Bob) is assigned to $0$ if
 the preparation (measurement) state is $\ket{0}$, $(\ket{0}+\ket{1})/\sqrt{2}$
 or $(\ket{0}+i\ket{1})/\sqrt{2}$.  Otherwise, it is assigned as
 $1$~\cite{Bruss:6state:1998}.
 For the \CWWfour scheme, the 12~preparation and measurement states form
 three set of MUBs in the four-dimensional Hilbert space.  Therefore, each
 prepared or measured qudit corresponds to two raw bits.  For instance, for
 states prepared or measured in the basis $\{
 (\ket{j}+(-1)^k\ket{j+2})/\sqrt{2} \colon j,k=0,1 \}$, the raw bits are $j$ and $k$~\cite{CWW2017}.

 For generation of the final secret key from the raw bits, we follow the standard Shor and Preskill procedure~\cite{Shor:Preskill:2000} adapted to the decoy state situation~\cite{Wang:Decoy:2005,Lo:Decoy:2005,MXF:2way:2006}.  using one-way classical communication.  And for simplicity, we use the so-called random key assignment in the sense that whenever Bob does not detect a signal, he will randomly and uniformly assign the ``measurement result'' as one of the four pairs of bits $00, 01, 10$ and $11$.  Finally, in case more than one of the Bob's detectors click, we randomly assign Bob's measurement result~\cite{Fung:2011:Squash}.

 By considering a corresponding entanglement-distillation protocol of four-dimensional qudits, Chau \emph{et al.}~\cite{CWW2017} proved that for ideal apparatus and in the infinitely long sifted key length limit, the one-way key rate of the \CWWfour scheme equals to that of the six-state scheme.  But they do not know why.  In Appendix~\ref{Sec:proof}, we show that although the six-state and the \CWWfour schemes
 have the same one-way key rate as a function of the BER in the raw key in the case of ideal source and detectors,
 they are very different schemes in the sense that one cannot be imbedded
 in the other.
 In simple terms, it means that the preparation methods of these two schemes
 are so different that they are not linearly related.
 Consequently, we believe that they have the same one-way key rate formulas in
 the ideal apparatus situation because of a lucky coincidence.
 And this coincidence comes from the following observation.
 As Alice and Bob each randomly picks one of the three MUBs, they in effect
 completely mix the phase and spin flip errors in the quantum
 channel~\cite{CWW2017}.
 For the \CWWfour scheme, the worst-case one-way key rate for a given raw key
 rate happens when the phase and spin flip errors are
 independent~\cite{CWW2017}.
 Thus, this worst-case key rate can be computed as if each raw bit had passed
 through a depolarizing channel --- the very same situation of the six-state
 scheme.

\begin{figure*}[t]
 \centering{
	\includegraphics[width=7.5cm]{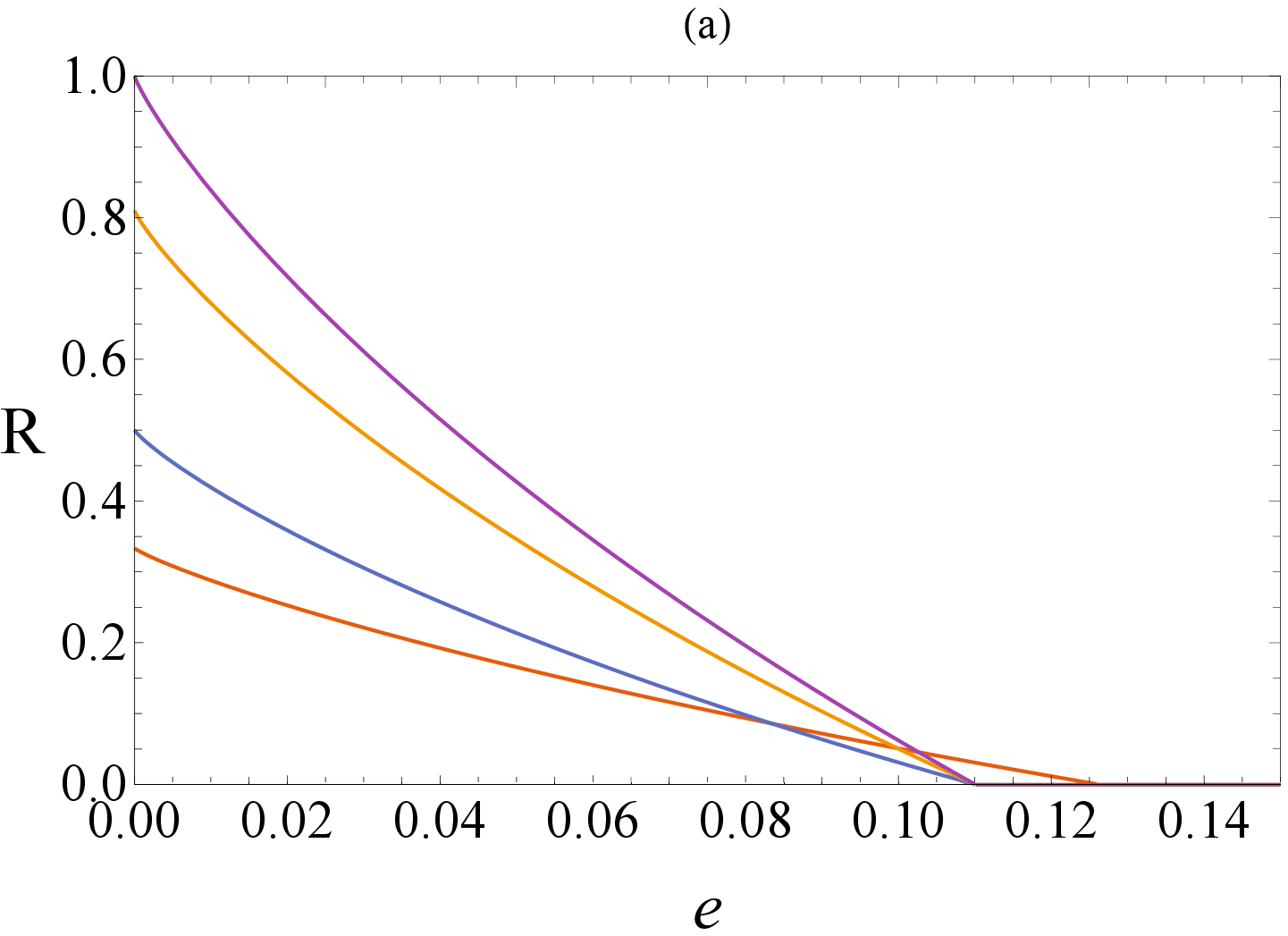}
	\qquad
	\includegraphics[width=7.5cm]{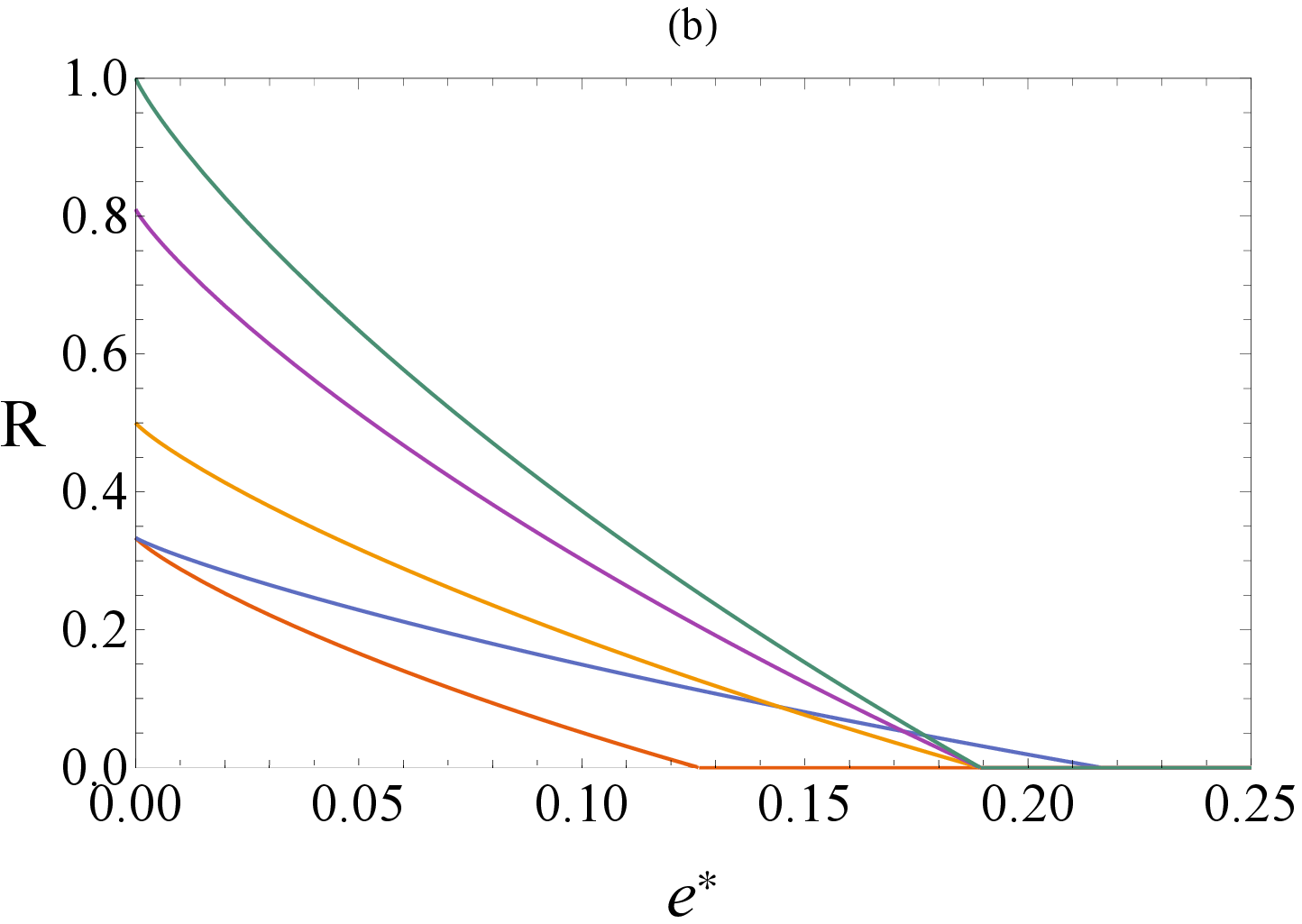}
	}
	\caption{The secret key rates $R$ of various protocols for ideal apparatus as a function of (a) the BER $e$ and (b) the DER $e^*$ in the raw key.
	In~(a), the curves from the top to the bottom are the extremely biased version, the version used in Islam \emph{et al.}~\cite{qkdtimebinqudits} and the unbiased version of the \SSfour scheme, and the six-state scheme (and hence also the \CWWfour scheme).
	In~(b), the curves from the top to the bottom are the extremely biased version, the version used in Islam \emph{et al.}~\cite{qkdtimebinqudits} and the unbiased version of the \SSfour scheme, the \CWWfour scheme and the six-state scheme.
	\label{F:key_rates}}
\end{figure*}

\section{One-way secret key rate formulas for the original and the modified \CWWfour schemes}
\label{Sec:key_rates}
 The one-way secret key rate formula for the \CWWfour scheme for ideal single
 photon source in Ref.~\cite{CWW2017} can extended to the case of using
 standard decoy state via the flagging method first explicitly presented in
 Ref.~\cite{MXF:2way:2006}.  (See Ref.~\cite{Chaunew17} for detail.)
 It is given by
\begin{align}
 R &= \frac{q Q_\mu}{s} \left\{ -H_2(\{E_\mu^g\}_{g=0}^3)
  \vphantom{\sum_{g=0}^3} \right. \nonumber \\
 & \qquad \left. +\min \sum_{g=0}^3
 \Omega^g [s-H_2(\{\delta_p^g\}_p)] \right\} ,
 \label{E:decoy-key-rate}
\end{align}
 where $s=2$ is the conversion factor from a $4$-dimensional dit to
 2~bits~\cite{Chau15,CWW2017}, $q=1/3$ is the chance that Alice and Bob use the
 same basis, $Q_\mu = \sum_{n=0}^{+\infty} Y_n
 \mu^n \exp(-\mu) / n!$ is the overall gain of the signal, $Y_n$ is the
 conditional probability that Bob's detector(s) clicks given Alice emits $n$
 photons.
 Also, $H_2(\{x_k\}_{k=1}^N) = H_2(x_1,\dots,x_N) \equiv -\sum_{k=1}^N x_k
 \log_2 x_k$ is the binary entropy function provided that $\sum_{k=1}^N x_k = 1$, $E_\mu^g = \sum_{n=0}^{+\infty}
 e_n^g Y_n \mu^n \exp(-\mu) / ( Q_\mu n!)$ is the overall rate of the
 $4$-dimensional signal dit with error $g$, and $e_n^g$ is the corresponding
 conditional probability given Alice emits $n$ photons~\cite{MXF:2way:2006}.
 Here, $g(=0,1,2,3)$ error means that the bitwise addition modulo 2 of the
 two corresponding raw secret bits of Alice and Bob is
 $g$~\cite{Chau15,CWW2017}.
 (Using the example in the last sentence of the second last paragraph, the
 least significant bit of $g$ is $j$ and the most significant bit of $g$ is
 $k$.)
 Further, $\Omega^g = Y_1 \mu \exp(-\mu) e^g_1 / Q_\mu$ is the fraction of
 single photon that experience error $g$, and $\delta_p^g$ is the ``phase
 error rate'' of those dits in the raw key with error $g$.
 Finally, the minimization is over all $\delta_p^g$'s that are consistent with
 the deduced values of $e_1^g$'s.
 Specifically, from Ref.~\cite{Chaunew17} which extends the work of
 Ref.~\cite{CWW2017}, we have $e_1^0 = A+B+C+D$, $e_1^1 = 2(B+D)$,
 $e_1^2 = 2(C+D)$ and $e_1^3 = 4D$ with $A+3B+3C+9D=1$ and $0\leqslant A,B,C,D
 \leqslant 1$.
 Moreover, the BER of the single photon event in the raw key is
 $e = (e_1^1+e_1^2)/2+e_1^3$.
 Then, the minimization (subject to a fixed $e$) is attained when $H_2(\{\delta_p^3\}_p) =
 H_2(\{D/4D,D/4D,D/4D,D/4D\}) = 2$, $H_2(\{\delta_p^1\}_p) =
 H_2(\{B/[2(B+D)],B/[2(B+D)], D/[2(B+D)],D/[2(B+D)]\})$,
 $H_2(\{\delta_p^2\}_p) =
 H_2(\{C/[2(C+D)],C/[2(C+D)], D/[2(C+D)],D/[2(C+D)]\})$
 and $H_2(\{\delta_p^0\}_p) = H_2(\{A/(A+B+C+D),B/(A+B+C+D),C/(A+B+C+D),
 D/(A+B+C+D)\})$.

 Let us also compute the secret key rate as a function of the DER of the raw key.
 By concavity of the entropy function, the worst case secret key rate occurs when $E_\mu^1 = E_\mu^2 =
E_\mu^3$ and $e_1^1 = e_1^2 = e_1^3$ in the infinite raw key length limit.
Moreover, using the notations in Ref.~\cite{CWW2017} (with the minor changes of
using labels $0$ to $3$ instead of labels in finite field notations),
$e_{01} = e_{02} = e_{03} = e_{10} = e_{20} = e_{30} = e_{11} = e_{23} = e_{32}$, $e_{12} = e_{21} = e_{33}$ and $e_{13} = e_{22} = e_{31}$.
Thus,
$\{ \delta_p^0 \} = \{ e_{00}/e_1^0, e_{01}/e_1^0, e_{01}/e_1^0, e_{01}/e_1^0\}$ and
$\{ \delta_p^g \} = \{ e_{01}/e_1^g, e_{01}/e_1^g, e_{12}/e_1^g, e_{13}/e_1^g\}$ for all $g\ne 0$.
Combined with the constraints that $e_{00} + 9 e_{01} + 3 e_{12} + 3 e_{13} = 1$
and the DER of the raw key $e^* = 6 e_{01} + 3 e_{12} + 3 e_{13}$, we may minimize the second term in Eq.~\eqref{E:decoy-key-rate} to get the secret key rate.  (See Appendix~\ref{Sec:modified_CWW4_proof} for detail.)

Fig.~\ref{F:key_rates} depicts the (one-way) secret key rates of various QKD protocols for ideal photon source and detector without the need of decoy.
It shows that the modified \CWWfour scheme tolerates up to 14.4\% BER (or 21.6\% DER), which is better than the six-state and the \SSfour schemes.  Note that the modified \CWWfour scheme is better than the unbiased (extremely biased) version of the \SSfour scheme when the DER $e^*$ exceeds 14.4\% (17.7\%).  This demonstrates the advantage of the modified \CWWfour scheme over the \SSfour scheme for very noisy channel.
Finally, we remark that the curves in Fig.~\ref{F:key_rates} show the worst case secret key rates.  In practice, one should use Eq.~\eqref{E:decoy-key-rate} to compute the secret key rate $R$ because it takes the error rates $e_1^g$ for all $g$ into account.  In a lot of cases, it gives a better value of $R$ that the worst case situation.

\begin{figure*}[t]
	\includegraphics[width=14cm]{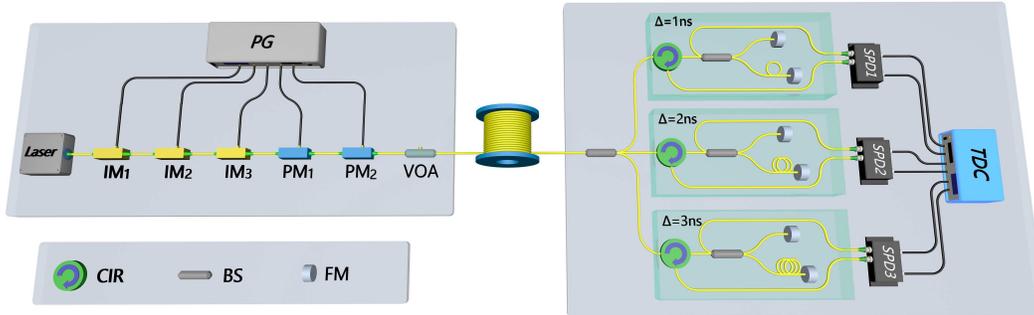}
	\caption{Experimental setup. IM: intensity modulator; PM: phase modulator; VOA: variable optical attenuator; PG: pattern generator; CIR: circulator; BS: beam splitter; FM: Faraday mirror; SPD: single photon detector; TDC: time-to-digital convertor.}
	\label{setup}
\end{figure*}

\section{Experimental results and their comparison with other QKD schemes}
\label{Sec:expt}
 We now report to our \CWWfour scheme experiment using time-bin representation.
Our implementation is shown in Fig.~\ref{setup}. At Alice's site, a pulse train with a repetition rate of $1\ $GHz is generated by modulating a  continuous wave laser using the first LiNbO$_3$ intensity modulator (IM$_1$), and only two random pulses indexed by $j$ and $k$ ($j,k\in\{0,1,2,3\}$ and $j<k$) in each packets of 4~time slots ($4\ $ns) are allowed to pass IM$_2$. IM$_3$ is employed to implement the decoy states method \cite{Hwang:Decoy:2003,Wang:Decoy:2005,Lo:Decoy:2005}, by which each packet is randomly modulated into signal state (whose intensity is $\mu$ photons per packet), and two decoy states (whose intensities are $\nu$ and $\upsilon$ photons per packet respectively.). Naturally, information leakage in single photon emission is decided according to the security proof, and then decoy-state method is straight-forward provided the phase randomized weak coherent source is employed. Then, the first phase modulator (PM$_1$) adds phase $-\pi/2$ or $\pi/2$ on each pulse to encode the key bits, PM$_2$ randomize the global phase of each packet. Finally, a large attenuation is added on these encoded pulses by a variable optical attenuator (VOA). Alice's output quantum state of each packet is $\ket{\psi_{jk}^\pm}$.

\begin{table*}[t]
	\centering
	\caption {The length of fiber ($l$), the mean photon numbers ($\mu$, $\nu$ and $\upsilon$) per packet and yield ($Q$) per packet, error rates ($E^1$, $E^{2}$, and $E^{3}$), and the secret key rate per packet (R). By increasing optical misalignment intentionally, additional observations of $E^2$ and the corresponding secret key rate are listed in brackets.}\label{results}
	\renewcommand\arraystretch{1.25}
	\setlength{\tabcolsep}{9pt}
	\begin{tabular}{cccccccc}
	\hline
	\hline
		$l$ &\multicolumn{2}{c}{Intensity} & $Q$ & $E^{1}$ & $E^{2}$ & $E^{3}$ & $R$\\
		\hline
		\multirow{3}{*}{$50\ $km} & $\mu$ & $0.66$ & $5.63\times10^{-3}$ & $0.216\%$ & $1.81\%$ $(15.1\%)$ & $0.217\%$ & \multirow{3}{*}{$7.31\times 10^{-4}$ $(1.64\times 10^{-5})$}\\
		& $\nu$ & $0.04$ & $3.56\times10^{-4}$ & $1.24\%$ & $2.77\%$ $(19.4\%)$ & $1.24\%$\\
		& $\upsilon$ & $0.0016$ & $2.92\times10^{-5}$ & $13.4\%$ & $14.2\%$ $(20.4\%)$ & $13.4\%$ \\
	\hline
	\hline
	\end{tabular}
\end{table*}

At Bob's site, the passive scheme based on a $1 \times 3$ beam splitter (BS) is used to implement a high-speed, stable, and low-loss decoding measurement. Following the passive measurement-delay choice,  three unbalanced Faraday-Michelson interferometers (FMI) with $\Delta=k-j\in\{1,2,3\}$ temporal delays are employed to make the $j$th pulse interfere with $k$th pulse. One three-port optical circulator (CIR), one $50:50$ BS and two Faraday mirrors (FM) constitute a FMI, whose two output ports are connected to two channels of one single photon detector (SPD), respectively. There are totally three double-channel SPDs, and all detection events are recorded by a time-to-digital convertor (TDC) that records the time-tagged and which-channel information.
(In principle, we should add a narrow-bandwidth filter in Bob's side to prevent wavelength-dependent beam-splitter attack~\cite{PhysRevA.84.062308}.   But as we are in effect uisng a monochromatic laser source, we decided not to do so in this demonstrative experiment to simply matter.)

Compared to the active scheme, the passive approach of the variable-delay interferometer with three delay values can be characterized as follows: (i) Highspeed, the passive choice among three $1\ $ns, $2\ $ns and $3\ $ns delay FMIs has no speed limits in principle \cite{RRDPSexp2}; (ii) Stable, three FMIs are insensitive to polarization variations, and also independent, so we can actively and independently compensate the phase shift of each interferometer, which is placed in small and separate ABS plastic case and on heating plate to keep its temperature a little above the environment temperature. Each interferometer is individually stabilized by controlling the current of corresponding heating plate, the feedback signals include counts of SPD when only IM$_1$ works and error rate during key distribution procedure; (iii) Low-loss, the insertion loss (IL) of each FMI depends on ILs of the CIR, BS, and FM, so the ILs of these three FMIs are almost identical, and approximate to be $0.80$~dB. Nevertheless, we have to point out that the passive approach needs more SPDs, and also cannot implement bias basis choices as easily as the active scheme.

Albeit the present implementation shares the same key rate formula with the six-state QKD protocol, they have its own features experimentally. The main advantage of the present scheme lies in its less demand for phase encoding. In fact, a main drawback of a time-phase coding six-state system is that Alice must module her phase modulator with four phases $0$, $\pi/2$, $\pi$ and $3 \pi/2$, but our scheme only needs $-\pi/2$ and $\pi/2$ phases. Two-phase modulation and $V_{\pi}$ peak-to-peak voltage facilitate the realization of high speed QKD, since four-phase modulation is more complex than two-phase modulation and $V_{3 \pi/2}$ peak-to-peak voltage is higher than $V_{\pi}$. In a word, \CWWfour has a simple phase-coding device, thus is particularly significant for practical QKD networks.

For each SPD, both channels are based on InGaAs/InP avalanche photodiodes (APD), and operated in gated Geiger mode with sine-wave filtering method \cite{lpf}. The two output ports of each FMI are different, the output port from the 3rd port of CIR includes about additional $0.40\ $dB IL compared to the other one, which is connected to the channel of the corresponding SPD with lower detection efficiency. In order to achieve optimal performances and low error rate, we first add a width discriminator to remove the wider filtered avalanche signals in each channel of SPD \cite{he2017sine}, and then set the measurement time window to $800\ $ps in TDC for all SPDs. The average detection efficiency, dark count rate, and after-pulse probability of these three double-channel SPDs is approximately $20.23\%$, $2.58\times10^{-6}$ per gate, and $1.05\%$, respectively. Here, the IL of CIR from the 2nd port to the 3rd port and the reduction effect by setting measurement time window have been included in the detection efficiency, the dark count rate of one SPD is the sum value of both channels.

The experimental results using standard telecom fiber channels of length $l = 50\ $km are listed in Table~\ref{results}.
The universal squash model~\cite{Fung:2011:Squash} is applicable to our experimental setup; and we used the data processing procedure there to handle events with multiple detector click.  Actually, these events contribute to no more than 0.003\% of the raw key, which has negligible effect on the secret key rate.
The BER and DER of the raw key are $(0.216\% + 1.81\%)/2 + 0.217\% = 1.23\%$ and $0.216\%+1.81\%+0.217\% = 2.24\%$, respectively.
With the data given in Table~\ref{results}, we calculate the corresponding parameters for single photon emission through standard decoy states formulas \cite{Limetal2014}, which are
$Y_1=8.38\times 10^{-3}$, $e^{1}_1=e^{3}_1=0.21\%$ and $e^{2}_1=1.9\%$.
According to Eq.~\eqref{E:decoy-key-rate}, the secret key rate is $R=7.31\times 10^{-4}$ per packet for the original \CWWfour scheme.

To verify the high error tolerance of the proposed protocol, we intentionally lower the optical interference to result in high $E^2$, which are listed in the brackets in Table~\ref{results}. We find that positive secret key rate is achieved even when $E^2$ equals $15.1\%$ (which means 7.88\% BER and 15.5\% DER).
As a comparison, let us consider a BB84 experiment with the same experimental conditions of ours.
More precisely, we consider a BB84 experiment in time-bin representation in phase encoding and that we intentionally lower the optical interference through the same optical mis-alignment.
Clearly, $E_\text{BB84}$ in this setup is $E^2$.  Besides, using the decoy state formulas in Ref.~\cite{Limetal2014}, we arrive at
$Y_1=8.38\times 10^{-3}$, $e_1=20.5\%$.
Hence, it is impossible for the BB84 scheme to generate a secure key using this setup. This example, therefore, shows the advantage of the \CWWfour scheme over BB84.

To compare our experiment with the recent experiment by Islam \emph{et al.}~\cite{qkdtimebinqudits}, we remark that their experiment aimed at producing the highest possible secret key rate (measured in unit of secret bit per second rather than per packet).  That is why they applied a strong bias of $90\%$ in choosing the computational (that is, the time-bin) basis.  Besides, they sent photons about 10~times faster than we do; and they optimized the decoy intensities and probabilites of using different decoys (though they did not show these probabilities explicitly, making readers hard to verify their computed secret key rates).
Moreover, in order to detect photons in the computational and the Fourier transformed bases, they used 8~photon detectors whereas we only use 3.
Using similar commercial fiber with $l = 50$~km, their observed DER (for single photon events) is at least $3.73\%$, which is higher than our $2.24\%$.  This suggests that the \CWWfour scheme may also have an edge in actual experimental setup in terms of noise control.

\section{Summary}
\label{Sec:summary}
In summary, we show that the \CWWfour scheme is not linearly related to the
six-state scheme although they share the same one-way key rate.
Although these two schemes may be related to each other in some subtle ways,
say, due to some hidden symmetry, we believe that it is simply a coincidence
that they share the same one-way key rate.
We also perform experiments to demonstrate the ease of implementing the
\CWWfour scheme in time-bin representation over the six-state scheme.
Moreover, we show the noise resiliency of the \CWWfour scheme over the BB84
scheme by artificially increasing the channel noise to such a level that no
secure key can be distilled out from the latter scheme while a secret key
can be generated from the former one though at a very low rate.

Last but not least, one may consider the following reduced \CWWfour scheme.
Instead of using a total of 12~states --- four states from each of the three
basis, Alice and Bob may use the following three pairs of states each chosen
from a basis:
$(|0\rangle\pm|1\rangle)/\sqrt{2}$, $(|0\rangle\pm|2\rangle)/\sqrt{2}$ and
$(|0\rangle\pm|3\rangle)/\sqrt{2}$.
That is to say, Alice randomly prepares these states and Bob randomly measures
them in one of the three bases.
Instead of two bits, they get one bit per transmitted four-dimensional
qudit that is prepared and measured in the same basis.
Furthermore, they reject the qudit if the measured state does not belong to
the two-dimensional Hilbert subspace used in the state preparation.
Using the argument in Ref.~\cite{CWW2017}, it is not difficult to show that the
one-way key rate of this modified scheme is less than or equal to half that of
the six-state scheme, with equality holds if none of the qudits is rejected.
(The factor of half in the key rate comes from the fact that Alice and Bob
use four-dimensional qudits instead of qubits in transmitting quantum
information in the channel.)
By considering all other possibilities, we see that out of the possible
reduced \CWWfour schemes that uses a pair of quantum state per basis, the above
one already gives the highest one-way key rate.
In this regard, we conclude that the \CWWfour scheme, which uses
four-dimensional qudits in 12~states that associate with three different bases
is the minimalist round-robin differential-phase-shift-type of scheme that
achieves the one-way key rate of the six-state scheme with the least number of
states and Hilbert space dimension.
This analysis shows the advantage of obtaining more than one bit of raw key
per channel use. However, in practice, the performance of the proposed protocol may be not so strong,
since there may be more noises introduced in the detection process \cite{practicalRRDPS}.

\appendix

\section{Secret key rate of the \SSfour scheme with unbiased basis selection for ideal apparatus and photon source in the infinite raw key length limit}
\label{Sec:SS10_rate}
One way to see this is that just like the BB84 scheme, the key rate of the \SSfour scheme equals $R = (2-H_\text{spin}-H_\text{phase})/(2\times 2)$ where $H_\text{spin}$ and $H_\text{phase}$ are the entropies of the spin flip and phase errors of the raw key, respectively.  In addition, the first and the second $2$ in the denominator are due to the conversion of a four-dimensional dit to a bit and the use of two bases with equal probabilities, respectively.
For the unbiased protocol using both computational and Fourier transformed states to generate the sifted key, $H_\text{spin} = H_\text{phase}$ in the worst case scenario.  Therefore, $R = (1-H_\text{spin})/2$.
When using the DER as the figure of merit, the worst case situation occurs when $H_\text{spin} = H_2(1-3e/2,e/2,e/2,e/2)$ where $e$ is the BER of the raw key (and hence the DER of the raw key equals $3e/2$).
Consequently, the secret key rate is $1.5$ times that of the six-state scheme.
In contrast, if the BER is used as the figure of merit, the worst case situation occurs when $H_\text{spin} = 2 H_2(1-e,e)$.  This gives the BB84 key rate.

\section{Proof of the unitary non-equivalence between the six-state scheme and the \CWWfour scheme}
\label{Sec:proof}
\begin{theorem}
 The six-state and the \CWWfour schemes are not linearly related in the sense
 that for any $m,n>0$, there is no monomorphism imbedding every
 normalized preparation state of the $m$ qubits for the six-state scheme to
 those of the $n$ qudits for the \CWWfour scheme and vice versa.
\end{theorem}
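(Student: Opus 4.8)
The plan is to recast ``linearly related'' in terms of the overlaps that an injective linear map is forced to preserve, and then exhibit a concrete incompatibility in each direction. Throughout I write the preparation states of the \CWWfour scheme as full tensor products $\bigotimes_{a=1}^{n}\ket{\psi_{j_a k_a}^{\pm}}$ over the $n$ qudits, and those of the six-state scheme as full product states $\bigotimes_{i=1}^{m}\ket{\phi_i}$ over the $m$ qubits, each $\ket{\phi_i}$ being one of the six qubit states. The only facts I need are: (i) each state set linearly spans its Hilbert space; (ii) every pairwise overlap of \CWWfour preparation states is a product of the single-qudit overlaps $0,\pm\tfrac12,\pm1$, hence a real dyadic rational; and (iii) the six-state set contains two states of overlap $1/\sqrt2$.

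First I would dispose of an embedding of the six-state scheme into the \CWWfour scheme. Let $T$ be any injective linear map carrying each six-state preparation state to a \CWWfour one. Freezing all but the first qubit in $\ket{0}$, the images $w_0=T(\ket{0}^{\otimes m})$, $w_1=T(\ket{1}\otimes\ket{0}^{\otimes(m-1)})$ and $T((\ket{0}+\ket{1})/\sqrt2 \otimes \ket{0}^{\otimes(m-1)})=(w_0+w_1)/\sqrt2$ must all be unit vectors; expanding the norm of the last forces $\braket{w_0}{w_1}=0$, and then $\braket{w_0}{(w_0+w_1)/\sqrt2}=1/\sqrt2$. As both $w_0$ and $(w_0+w_1)/\sqrt2$ are \CWWfour preparation states, their overlap is a dyadic rational by (ii), contradicting the irrational value $1/\sqrt2$. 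This kills one direction at once and needs no global isometry hypothesis.

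The reverse direction is the main obstacle, since the \CWWfour states contain no phase ($\pm i$) basis and an analogous map need not preserve overlaps at all; my plan here is structural rather than metric. I would first reduce to a single qudit: freezing qudits $2,\dots,n$ in $\ket{\psi_{01}^{+}}$ turns any embedding into an injective linear map $T'\colon\mathbb{C}^4\to(\mathbb{C}^2)^{\otimes m}$ sending all twelve single-qudit \CWWfour states to six-state, hence fully product, states. Writing $u_j=T'\ket{j}$, the pivotal step is to show each $u_j$ is itself a full product vector: for every single-qubit-versus-rest cut, each $2\times2$ minor is a quadratic form that vanishes on both $(u_j\pm u_k)/\sqrt2$, and summing the resulting pair relations over the four indices forces it to vanish on each $u_j$; thus every $u_j$ has Schmidt rank one across every cut and is fully product. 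Since $(u_j+u_k)/\sqrt2$ is again product, $u_j$ and $u_k$ can differ in at most one tensor factor, so the four linearly independent strings of single-qubit factors lie pairwise at Hamming distance exactly one. A short combinatorial argument then shows that four such strings must agree in every coordinate but a single common one, forcing $u_0,\dots,u_3$ into a two-dimensional subspace, which is impossible for four independent vectors.

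I expect the reduction-to-one-qudit step and the minor-vanishing argument to carry the real weight: the former because it must simultaneously preserve injectivity and the ``maps to a preparation state'' property, and the latter because it is precisely what upgrades the weak normalization constraints — which by themselves only pin down real parts of overlaps — into the rigid conclusion that the $u_j$ are product vectors. Once product-ness is secured, the Hamming-distance contradiction is routine, and the first direction reduces, by contrast, to a single overlap computation exploiting the arithmetic gap between the atomic overlaps $1/\sqrt2$ and $1/2$.
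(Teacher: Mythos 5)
Your proposal is correct in both directions, and each direction runs on a genuinely different engine from the paper's proof. For the six-state-into-\CWWfour direction, the paper works coefficient-by-coefficient in the computational basis: its observations on equal-magnitude coefficients force $L\ket{0}$ and $L\ket{1}$ to have disjoint supports, so $L(\ket{0}+\ket{1})/\sqrt{2}$ would need $2^{n+1}$ rather than $2^n$ nonzero terms. You instead use overlap arithmetic --- every pairwise overlap of \CWWfour preparation states is a \emph{real} dyadic rational, while normalization of the three images forces an overlap of $1/\sqrt{2}$ --- which needs fewer states (no $\pm i$ states) and no injectivity; just state explicitly that it is the realness of $\braket{w_0}{w_1}$ that upgrades $\mathrm{Re}\braket{w_0}{w_1}=0$ to $\braket{w_0}{w_1}=0$. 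For the reverse direction, the paper fixes a normal form $L'\ket{\psi_{01}^+}=\ket{0}^{\otimes m}$ by composing with a unitary (implicitly one that permutes the six qubit states), then chases the relations $\ket{\psi_{12}^-}=\ket{\psi_{01}^+}-\ket{\psi_{02}^+}$ and $\ket{\psi_{23}^-}=\ket{\psi_{02}^+}-\ket{\psi_{03}^+}$ through a case analysis of allowed coefficient magnitudes until an entangled, singlet-like image appears. Your route --- the parallelogram identity $Q(u_j+u_k)+Q(u_j-u_k)=2Q(u_j)+2Q(u_k)$ applied to the $2\times 2$ minors $Q$ over all six pairs forces each $u_j=T'\ket{j}$ to be fully product, product-ness of $(u_j+u_k)/\sqrt{2}$ then forces pairwise ``Hamming distance'' one, and the four $u_j$ get trapped in a two-dimensional subspace, contradicting injectivity --- is more systematic: it needs no normal form, treats all configurations uniformly, and sidesteps the one loose step in the paper's write-up (the claim that $L'\ket{\psi_{02}^+}$ has only \emph{two} nonzero computational-basis terms is inconsistent with the paper's own observation three, since two allowed $\ket{0\cdots 0}$-coefficients summing to $1$ must each equal $1/2$, i.e.\ four terms each; the paper's strategy survives this repair, with the contradiction then appearing already at $L'\ket{\psi_{12}^-}$). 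What the paper's proof buys in exchange is brevity and concreteness: once its normal form is fixed, the contradiction is exhibited in two lines of explicit algebra, whereas your minor-vanishing lemma and Hamming-distance combinatorics take more space to write out in full.
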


\begin{proof}
 The following three observations are needed.
 First, no complex numbers $a,b$ satisfy $|a|=|b|=|a\pm b|/\sqrt{2}=|a\pm i b|
 \sqrt{2}$.
 Second, all normalized $n$ qudit preparation states in the \CWWfour scheme can
 be written in the form $\sum_{j_1,\ldots,j_n=0}^3 \alpha_{j_1,\ldots,j_n}
 \ket{j_1,\ldots,j_n}$ with exactly $2^n$ non-zero $\alpha_{j_1,\ldots,j_n}$'s
 each with the same magnitude of $2^{-n/2}$.
 Third, all normalized $m$ qubit preparation states in the six-state scheme can
 be written in the form $\sum_{k_1,\ldots,k_m=0}^1 \beta_{k_1,\ldots,k_m}
 \ket{k_1,\ldots,k_m}$ with $2^\ell$ non-zero $\beta_{k_1,\ldots,k_m}$'s, each
 with the same magnitude of $2^{-\ell/2}$ for some $\ell\geqslant 0$.

 Now suppose the contrary that there is an injective linear operator $L$
 sending normalized preparation states of the $m$ qubit six-state scheme to the
 normalized $n$ qudit \CWWfour scheme.
 We consider only the case of $m=1$ here as the general case can be proven in
 the same way.  Applying the first two observations to $L\ket{\varphi}$, where
 $\ket{\varphi}$ is one of the preparation states in the six-state scheme, we
 conclude that $\bra{j_1,\ldots,j_n}L\ket{0} = 0$ whenever
 $\bra{j_1,\ldots,j_n}L\ket{1}\ne 0$.  However, this means
 $L(\ket{0}+\ket{1})/\sqrt{2}$ has to be a sum of $2^{n+1}$ rather than $2^n$
 non-zero terms over the basis ${\mathcal B}_4^{\otimes n}$, which contradicts
 the second observation.

 Suppose there is an injective linear operator $L'$ sending normalized
 preparation states of the $n$ qudit \CWWfour scheme to the normalized $m$
 qubit six-state scheme.
 By the Hilbert space dimension consideration, we know that $m\geqslant 2n$.
 Again, we consider only the case of $n=1$ for the
 general case can be proven in the same way.
 We may assume that $L'\ket{\psi_{01}^+} = \ket{0}^{\otimes m}$.
 (Otherwise, we simply amend a unitary transformation on the $m$ qubits after
 $L'$ to make it so.)
 Consider $L'\ket{\psi_{12}^-} = \ket{0}^{\otimes m} - L'\ket{\psi_{02}^+}$
 and hence $\bra{0,\ldots,0}L'\ket{\psi_{12}^-} +
 \bra{0,\ldots,0}L'\ket{\psi_{02}^+} = 1$.
 From observation three, this is possible only if $L'\ket{\psi_{02}^+}$ can be
 expressed as a sum of two non-zero terms over the basis
 ${\mathcal B}_2^{\otimes m}$, say, $\ket{0}^{\otimes (m-1)}\otimes (\ket{0} +
 \ket{1})/\sqrt{2}$.
 The same argument plus the injectivity of $L'$ lead to $L'\ket{\psi_{03}^+} =
 \ket{0}^{\otimes (m-2)}\otimes (\ket{0}+\ket{1})\otimes\ket{0}/\sqrt{2}$, say.
 However, this implies $L'\ket{\psi_{23}^-} = \ket{0}^{\otimes (m-2)} \otimes
 (\ket{0}\otimes\ket{1}-\ket{1}\otimes\ket{0})/\sqrt{2}$, which is not a
 preparation state of the $m$ qubit six-state scheme.
 Hence, $L'$ does not exist; and this completes our proof.
\end{proof}

 We remark on passing that using the same idea in the above proof, there is
 no monomorphism imbedding every normalized preparation state of the $m$ qudits
 for the \SSfour scheme to those of the $n$ qudits for the \CWWfour scheme.

\section{Detailed derivation of the secret key rate formula for the \CWWfour scheme in the infinite raw key length limit as a function of the DER}
\label{Sec:modified_CWW4_proof}
We need to minimize the second term in Eq.~\eqref{E:decoy-key-rate}, which can be rewritten as
\begin{align}
 & \min \sum_{g=0}^3 \Omega^g \left[ s - H_2(\{\delta_p^g\}_p) \right]
  \nonumber \\
 ={} & \Omega \left[ s - \max \sum_{g=0}^3 e_1^g H_2(\{\delta_p^g\}_p) \right]
  \nonumber \\
 ={} & \Omega \left\{ s - \max \left[ H_2(\{e_{jk}\}_{j,k=0}^3) - H_2(\{e_1^g\}_{g=0}^3) \right] \right\} \nonumber \\
 ={} & \Omega \left[ s + H_2(\{e_1^g\}_{g=0}^3) - \max H_2(\{e_{jk}\}_{j,k=0}^3) \right] ,
 \label{E:appendixB1}
\end{align}
where $\Omega = \sum_{g=0}^3 \Omega^g = Y_1\mu \exp(-\mu)/Q_\mu$.
Clearly, the maximum in the above equation is attained when $e_{12}=e_{13}$.
Combined with the sum rule $e_{00} + 9 e_{01} + 3 e_{12} + 3 e_{13} = 1$ and the expression for the DER $e^* = 6e_{01} + 3e_{12} + 3e_{13}$, the $H_2(\{e_{jk}\}_{j,k=0}^3)$ becomes a function of $e^*$, $e_{01}$ only.
By considering $d H_2(\{e_{jk}\}_{j,k=0}^3) / d e_{01}$ and
$d^2 H_2(\{e_{jk}\}_{j,k=0}^3) / d e_{01}^2$, it is easy to see that
$H_2(\{e_{jk}\}_{j,k=0}^3)$ is maximized when
\begin{equation}
 f(e_{01}) \equiv e_{01}^3 - (1 - 3e_{01} - e^*) (e^* - 6 e_{01})^2  = 0 \label{E:appendixB2}
\end{equation}
 in the domain $[0,e^*/6]$.  As $f(0) < 0$, $f(e^*/6) > 0$ and $df/d e_{01} > 0$
 for $e_{01}\in [0,e^*/6]$, there is an unique root for $f$ in the domain
 $[0,e^*/6]$.
 Substituting this root back to $H_2$ maximizes it; and hence we obtain the
 secret key rate through Eq.~\eqref{E:decoy-key-rate}.

\medskip
\acknowledgments
H.F.C. is supported by the Research Grant Council of the HKSAR Government (Grant No. 17304716). Other authors are supported by the National Key Research And Development Program of China (Grant No.2016YFA0302600), the National Natural Science Foundation of China (Grant Nos. 61822115, 61775207, 61622506, 61627820, 61575183), and the Anhui Initiative in Quantum Information Technologies.

\bibliographystyle{apsrev4-1}

\bibliography{Biblisource}

\end{document}